\newtheorem{theorem}{Theorem}
\title{\LARGE \bf
Spectral Analysis of Virus Spreading in Random Geometric Networks
}
\author{Victor M. Preciado and Ali Jadbabaie
\thanks{This work was supported by ONR MURI N000140810747, and AFOR's complex networks program.}
\thanks{The authors are with the Department of Electrical and Systems Engineering, University of Pennsylvania, 3451 Walnut Street, 
        {\tt\small \{preciado,jadbabai\}@seas.upenn.edu}}%
}
\begin{document}

\maketitle
\thispagestyle{empty}
\pagestyle{empty}

\begin{abstract}

In this paper, we study the dynamics of a viral spreading process in random geometric graphs (RGG). The spreading of the viral process we consider in this paper is closely related with the eigenvalues of the adjacency matrix of the graph. We deduce new explicit expressions for all the moments of the eigenvalue distribution of the adjacency matrix as a function of the spatial density of nodes and the radius of connection. We apply these expressions to study the behavior of the viral infection in an RGG. Based on our results, we deduce an analytical condition that can be used to design RGG's in order to tame an initial viral infection. Numerical simulations are in accordance with our analytical predictions.

\end{abstract}

\section{Introduction}

The analysis of spreading processes in large-scale complex networks is a
fundamental dynamical problem in network science. The relationship between
the dynamics of epidemic/information spreading and the structure of the
underlying network is crucial in many practical cases, such as the spreading
of worms in a computer network, viruses in a human population, or rumors in
a social network. Several papers approached different facets of the virus
spreading problem. A rigorous analysis of epidemic spreading in a
finite one-dimensional linear network was developed by Durrett and Liu in 
\cite{DL88}. In \cite{WCWF03}, Wang et al. derived a sufficient condition to
tame an epidemic outbreak in terms of the spectral radius of the adjacency
matrix of the underlying graph. Similar results were derived by Ganesh
et al. in \cite{GMT05}, establishing a connection between the behavior of a
viral infection and the eigenvalues of the adjacency matrix of the network.

In this paper, we study the dynamics of a viral spreading in an important type of proximity
networks called Random Geometric Graphs (RGG). RGG's consist of a set of
vertices randomly distributed in a given spatial region with edges connecting pairs of nodes that
are within a given distance $r$ from each other (also called \emph{connectivity radius}). In this paper, we
derive new explicit expressions for the expected spectral
moments of the random adjacency matrix associated to an RGG. Our results allow us to derive analytical conditions under which an RGG is well-suited to tame an infection in the network.

The paper is structured as follows. In Section II, we describe random
geometric graphs and introduce several useful results concerning their
structural properties. We also present the spreading model in \cite{WCWF03}
and review an important result that relates the behavior of an initial infection
with the spectral radius of the adjacency matrix. In Section III, we study
the eigenvalue spectrum of random geometric graphs. We derive explicit
expressions for the expected spectral moments in the case of one- and
two-dimensional RGG's. In Section IV, we use these expressions to study the spectral radius of RGG's. Our results allow us to
design RGG's with the objective of taming epidemic outbreaks. Numerical
simulations in Section IV validate our results.

\section{Virus Spreading in Random Geometric Graphs}

In this section, we briefly describe random geometric graphs and introduce
several useful results concerning their structural properties (see \cite%
{Pen03} for a thorough treatment). We then describe the spreading model
introduced in \cite{WCWF03} and show how to study the behavior of an
infection in the network from the point of view of the adjacency eigenvalues.

\subsection{\label{RGG}Random Geometric Graphs}

Consider a set of $n$ nodes, $V_{n}=\{v_{1},...,v_{n}\}$, respectively
located at random positions, $\chi _{n}=\{\mathbf{x}_{1},...,\mathbf{x}_{n}\}
$, where $\mathbf{x}_{i}$ are i.i.d. random vectors uniformly distributed on
the $d$-dimensional unit torus, $\mathbb{T}^{d}$. We use the torus for
convenience, to avoid boundary effects. We then connect two nodes $%
v_{i},v_{j}\in V_{n}$ if and only if $\left\Vert \mathbf{x}_{i}-\mathbf{x}%
_{j}\right\Vert \leq r $, where $r$ is the
so-called connectivity radius. In other words, a link exists between $v_{i}$
and $v_{j}$ if and only if $v_{j}$ lies inside the sphere of radius $r\left(
n\right) $ centered at $v_{i}$. We denote this spherical region by $%
S_{i}\left( r \right) $, and the resulting random geometric
graph by $G\left( \chi _{n};r \right) $. We define a \emph{walk }of length $%
k$ from $v_{0}$ to $v_{k}$ as an ordered set of (possibly repeated) vertices 
$\left( v_{0},v_{1},...,v_{k}\right) $ such that $v_{i}\sim v_{i+1},$ for $%
i=0,1,...,k-1$; if $\nu _{k}=\nu _{0}$ the walk is said to be \emph{closed}.

The \emph{degree} $d_{i}$ of a node $v_{i}$ is the number of edges
connected to it. In our case, the degrees are identical
random variables with expectation \cite{Pen03}:%
\begin{equation}
\mathbb{E}\left[ d_{i}\right] =nV^{\left( d\right) }r^{d},
\label{Expected Degree}
\end{equation}%
where $V^{\left( d\right) }$ is the volume of a $d$-dimensional unit sphere, 
$V^{\left( d\right) }=\pi ^{d/2}\left/ \Gamma \left( d/2+1\right) \right. $,
and $\Gamma \left( \cdot \right) $ is the Gamma function. The \emph{%
clustering coefficient} is a measure of the
number of triangles in a given graph, where a triangle is defined by the set
of edges $\left\{ \left( i,j\right) ,\left( j,k\right) ,\left( k,i\right)
\right\} $ such that $i\sim j\sim k\sim i$. For one- and two-dimensional
RGG's we can derive an explicit expression for the expected number of triangles, 
$\mathbb{E}\left[ t_{i}\right] $, touching a particular node $v_{i}$
(details are provided in Section III).

The \emph{adjacency matrix }of
an undirected graph $G,$ denoted by $A(G)=[a_{ij}]$, is defined entry-wise
by $a_{ij}=1$ if nodes $i$ and $j$ are connected, and $a_{ij}=0$ otherwise.
(Note that $a_{ii}=0$ for simple graphs.) Denote the eigenvalues of a $%
n\times n$ symmetric adjacency matrix $A(G)$ by $\lambda _{1}\leq ...\leq
\lambda _{n}$. The $k$\emph{-}th order
moment of the eigenvalue spectrum of $A(G)$ is defined as: 
\begin{equation*}
m_{k}(G)=\frac{1}{N}\sum_{i=1}^{n}\lambda _{i}^{k}
\end{equation*}%
(which is also called the $k$-th \emph{order spectral moment}).

We are interested in studying asymptotic properties of the sequence $G(\chi
_{n};r\left( n\right) )$ for some sequence $\left\{ r\left( n\right) :n\in 
\mathbb{N}\right\} $. In \cite{Pen03}, two particularly interesting regimes
are introduced: the \emph{thermodynamic limit} with $nr\left( n\right)
^{d}\rightarrow \alpha \in \left( 0,\infty \right) $, so that the expected
degree of a vertex tends to a constant, and the \emph{connectivity regime}
with $r\left( n\right) \rightarrow \gamma \left( \frac{\log n}{n}\right)
^{1/d}$ with a constant $\gamma $, so that the expected degree of the nodes
grows as $c\log n$. In this paper, we focus on studying the
spectral moments in the connectivity regime. In Section III, we
derive explicit expressions for the expected spectral moments of $G\left(
\chi _{n};r_{n}\right) $ for any network size $n$. We then use this
information to bound the spectral radius of the adjacency matrix of $G(\chi
_{n};r\left( n\right) )$.

\subsection{Spectral Analysis of Virus Spreading}

In this section, we briefly review an automaton model that describes the
dynamics of a viral infection in a specific network of interactions.
This model was proposed and analyzed in \cite{WCWF03}, where a connection between the growth of an initial infection in the
network and the spectral radius of the adjacency matrix was established. This model involves
several parameters. First, the infection rate $\beta $ represents the
probability of a virus at an infected node $i$ spreading to another
neighboring node $j$ during a time step. Also, we denote by $\delta $ the
probability of recovery of any infected node at each time step. For
simplicity, we consider $\beta $ and $\delta $ to be constants for all
the nodes in $G$. We also denote by $p_{i}\left[ k\right] $ the
probability that node $i$ is infected at time $k$. The evolution of the
probability of infection is modeled by means of the following system of
non-linear difference equation:%
\begin{equation}
p_{i}\left[ k+1\right] =[1-\prod_{j\in \mathcal{N}_{i}}\left( 1-\beta \,p_{j}%
\left[ k\right] \right) ]+\left( 1-\delta \right) p_{i}\left[ k\right] ,
\label{Epidemic Model}
\end{equation}%
for $i=1,...,n$, where $\mathcal{N}_{i}$ denotes the set of nodes connected
to node $i$. We are interested in studying the dynamics of the system for a low-density level of infection, i.e., $\beta \,p_{j}\left[ k\right] \ll 1$. In this regime, a sufficient condition for a small initial infection to die out
is  \cite{WCWF03}: 
\begin{equation}
\lambda _{\max }\left( A(G)\right) <\frac{\delta }{\beta }.
\label{Epidemic Conditions}
\end{equation}%
One can prove that (\ref{Epidemic Conditions}) is a sufficient condition for
local stability around the disease-free state. Thus, we can use condition (%
\ref{Epidemic Conditions}) to design networks with the objective of taming
initial low-density infections.

\section{Spectral Analysis of Random Geometric Graphs}

In this paper, we study the eigenvalue distribution of the random adjacency
matrix associated to $G(\chi _{n};r\left( n\right) )$\ for $n\rightarrow
\infty $. In this section, we characterize eigenvalue distribution using its sequence of spectral moments. In our derivations, we use an interesting graph-theoretical interpretation
of the spectral moments \cite{Big93}: \emph{the $k$-th spectral moment
of $G$ is proportional to the number of closed walks of length $k$ in $G$.} This result allows us to transform the algebraic problem of computing
spectral moments of the adjacency matrix into the combinatorial problem of
counting closed walks in the graph. In the following subsection, we compute the expected value of the number of
closed walks of length $k$ in $G(\chi _{n};r\left( n\right) )$.

\subsection{Spectral Moments of One-Dimensional RGG's}

As we mentioned above, we can compute the $k$-th spectral moment of a graph by counting the number of closed walks of length $k$. In the case of an RGG $G(\chi_{n};r\left( n\right) )$, this number is a random variable. In this subsection, we introduce a novel technique to compute the expected number of closed walks of length $k$. For clarity, we introduce our technique
for the first three expected spectral moments $k=1,2,3$. We then use these results to induce a general expression for higher-order moments in one-dimensional RGG's.

The first-order spectral moment is equal to the number of closed walks of
length $k=1$. Since $G(\chi _{n};r)$ is a simple graphs with no self-loops,
we have that $m_{1}\left( G(\chi _{n};r)\right) $ is a deterministic
quantity equal to $0$.

We now study the expected second moment, $\mathbb{E}\left[ m_{2}\left(
G(\chi _{n};r)\right) \right] $, by counting the number of closed walks of length two.
In simple graphs, the only possible closed walks of length two are those
that start at a given node $v_{i}$, visit a neighboring node $v_{j}\in 
\mathcal{N}_{i}$, and return back to $v_{i}$. Hence, the number of closed
walks of length two starting at $v_{i}$ is equal to $d_{i}$. Thus, from (\ref%
{Expected Degree}), we have%
\begin{equation*}
\mathbb{E}\left[ m_{2}\right] =\frac{1}{n}\sum_{i=1}^{n}\mathbb{E}\left[
d_{i}\right] =nV^{\left( d\right) }r^{d}\text{,}
\end{equation*}%
where this result is valid for any dimension $d\geq 1$.

The third spectral moment is proportional to the number of closed walks of
length three in the graph. We now derive an expression for the expected
number of triangular walks starting at a given node $v_{i}$ in a one-dimensional RGG. Since all nodes
are statistically equivalent, our result is valid for any other starting
node. For simplicity in our calculations, we consider that $v_{i}$ is
located at the origin. A triangular walk starting at node $v_{i}$ exists if and
only if there exist two nodes $v_{j}$ anv $v_{k}$ such that $\left\vert x_{j}\right\vert \leq r$, $\left\vert x_{k}\right\vert \leq r$, and $\left\vert x_{k}-x_{j}\right\vert \leq r$. Also, since the random distribution of
vertices on $\mathbb{T}^{1}$ is uniform (with density $n$), the probability
of nodes $v_{j}$ and $v_{k}$ being respectively located in the differential
lengths $\left[ x_{j}+dx_{j}\right) $ and $\left[ x_{k}+dx_{k}\right) $ is
equal to $n^{2}~dx_{j}dx_{k}$. Hence, one can compute the expected number of
triangular walks starting at node $v_{i}$ as%
\begin{equation*}
\mathbb{E}\left[ t_{i}\right] =\int \int_{\left( x_{j},x_{k}\right) \in
H_{2}\left( r\left( n\right) \right) }n^{2}~dx_{j}dx_{k},
\end{equation*}%
where%
\begin{align}
H_{2}\left( r\right) & =\left\{ \left( x_{j},x_{k}\right) \in \mathbb{T}^{2}%
\text{ s.t. }\left\vert x_{j}\right\vert \leq r,\right.   \label{Vol H2} \\
& \left. \text{ \ \ \ \ \ }\left\vert x_{k}-x_{j}\right\vert \leq
r,\left\vert x_{k}\right\vert \leq r\right\} .  \notag
\end{align}%
Thus, $\mathbb{E}\left[ t_{i}\right] $ can be computed as $n^{2}$Vol$\left[
H_{2}\left( r\left( n\right) \right) \right] $ (where Vol$\left( H\right) $
denotes the volume contained by the polyhedron $H$.) Notice that $H_{2}\left(
r\right) $ can be defined by a set of linear inequalies; hence, $H_{2}\left(
r\right) $ is a convex polyhedron that depends on $r$. Furthermore, the set
of linear inequalities in (\ref{Vol H2}) presents a homogeneous dependency
with respect to the parameter $r$. Therefore, we can write Vol$(H_{2}\left(
r\right) )$ as $r^{2}$Vol$(H_{2}\left( 1\right) )$. Finally, one can easily
compute the volume of $H_{2}\left( 1\right) $ to be equal to $3$. Thus, the
expected third spectral moment of a one-dimensional RGG is given by%
\begin{equation*}
\mathbb{E}\left[ m_{3}\right] =\frac{1}{n}\sum_{i=1}^{n}\mathbb{E}\left[
t_{i}\right] =3n^{2}r^{2}.
\end{equation*}

In the following, we extend the above technique to compute higher-order
expected spectral moments. Denote by $W_{i}^{\left( k\right) }$ the number
of closed walks of length $k$ starting at node $v_{i}$
in $G(\chi _{n};r\left( n\right) )$. Regarding $W_{i}^{\left( k\right) }$,
we derive the following result.

\begin{theorem}
\label{Spectral Moments 1D}The expected number of closed walks
of length $k$, $W_{i}^{\left( k\right) }$, in a random geometric graph, $%
G(\chi _{n};r)$, on $\mathbb{T}^{1}$ is given by%
\begin{equation*}
\mathbb{E}\left[ W_{i}^{\left( k\right) }\right] =\left( nr\right) ^{k-1}%
\frac{1}{2\left( k-1\right) !}\sum_{j=1}^{k-2}\binom{k-1}{j-1}~E_{k-1,j},
\end{equation*}%
where $E_{k-1,j}$ are the Eulerian numbers \footnote{%
The Eulerian number $E\left( n,k\right) $ gives the number of permutations
of $\{1,2,...,n\}$ having $k$ permutation ascents \cite{GKP94}.}.
\end{theorem}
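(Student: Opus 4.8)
The plan is to generalize the volume-counting argument already used for $k=2,3$. The key observation from the $k=3$ case is that the expected number of closed walks of length $k$ starting at $v_i$ equals $n^{k-1}$ times the volume of a polytope $H_{k-1}(r)$ in $\mathbb{T}^{k-1}$, carved out by the connectivity inequalities along the walk. By homogeneity, $\mathrm{Vol}(H_{k-1}(r)) = r^{k-1}\mathrm{Vol}(H_{k-1}(1))$, so the entire problem reduces to computing a single combinatorial volume. First I would set up the integral: placing $v_i$ at the origin and letting $x_1,\dots,x_{k-1}$ be the positions of the intermediate nodes on the walk, a closed walk of length $k$ corresponds to the constraints $|x_1|\le r$, $|x_{k-1}|\le r$, and $|x_{\ell+1}-x_\ell|\le r$ for each consecutive step, plus $x_0 = x_k = 0$. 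Thus
\begin{equation*}
\mathbb{E}\left[ W_{i}^{(k)}\right] = n^{k-1}\,\mathrm{Vol}\left(H_{k-1}(r)\right) = (nr)^{k-1}\,\mathrm{Vol}\left(H_{k-1}(1)\right).
\end{equation*}

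**The main step** is to evaluate $\mathrm{Vol}(H_{k-1}(1))$ and identify it with the Eulerian-number sum. Setting $r=1$ and rescaling, I would substitute the consecutive differences $u_\ell = x_{\ell+1}-x_\ell$ as new variables, so that each $|u_\ell|\le 1$ and the constraints on the partial sums $\sum u_\ell$ (returning to the origin) encode the geometry. The volume of such a region of $\{0,1\}$-constrained differences summing to zero is a classical object: it is exactly the number of descents/ascents statistic counted by Eulerian numbers, appearing through the relation between volumes of hypersimplex-type slices and Eulerian numbers. The hard part will be carefully tracking which partial-sum inequalities are active and matching the resulting piecewise-polynomial volume to the specific sum $\sum_{j=1}^{k-2}\binom{k-1}{j-1}E_{k-1,j}$; this is where a clean bijection or generating-function identity is needed rather than brute-force integration.

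**Having reduced the problem** to this volume computation, I would verify the formula against the already-established cases: for $k=2$ the formula should yield $\mathbb{E}[W_i^{(2)}]=nr$ (since $d_i$ has expectation $2nr$ on the torus, giving $m_2 = nV^{(1)}r = 2nr$—I would need to reconcile the factor of $2$ arising from the two orientations of the $\mathbb{T}^1$ sphere $V^{(1)}=2$), and for $k=3$ it should reproduce $\mathrm{Vol}(H_2(1))=3$, confirming $\mathbb{E}[W_i^{(3)}]=3(nr)^2$. These low-order checks pin down the normalizing constant $\tfrac{1}{2(k-1)!}$ and confirm the indexing of the Eulerian numbers $E_{k-1,j}$.

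**I expect the principal obstacle** to be the combinatorial identification of the polytope volume with the Eulerian-number sum. Establishing $\mathbb{E}[W_i^{(k)}] = n^{k-1}r^{k-1}\mathrm{Vol}(H_{k-1}(1))$ is routine once the homogeneity argument from the $k=3$ case is in hand, but showing that $\mathrm{Vol}(H_{k-1}(1)) = \frac{1}{2(k-1)!}\sum_{j=1}^{k-2}\binom{k-1}{j-1}E_{k-1,j}$ requires the nontrivial link between lattice-point/volume counts of difference polytopes and permutation-ascent statistics. I would approach this either by a direct recursive decomposition of $H_{k-1}(1)$ according to the sign pattern of the steps $u_\ell$, or by invoking the known formula expressing the volume of the region $\{|u_\ell|\le 1, \sum u_\ell = 0\}$ in terms of Eulerian numbers via the Laplace-transform (exponential generating function) of the uniform-sum density.
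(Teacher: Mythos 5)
Your reduction is exactly the paper's own proof: encode the existence of the closed walk $(v_1,v_2,\ldots,v_k,v_1)$ by the linear constraints $\left\vert x_2\right\vert \le r$, $\left\vert x_{j+1}-x_j\right\vert \le r$, $\left\vert x_k\right\vert \le r$, integrate the uniform density $n^{k-1}$ over the resulting convex polytope $H_{k-1}(r)$, and use homogeneity of the inequalities to conclude $\mathbb{E}\left[ W_i^{(k)}\right]=(nr)^{k-1}\,\mathrm{Vol}\left(H_{k-1}(1)\right)$. You should be aware, though, that the paper proves nothing beyond this point: the identity expressing $\mathrm{Vol}\left(H_{k}(1)\right)$ as a binomial-weighted sum of Eulerian numbers is quoted without derivation from \cite{PJ09}, a reference listed as ``in preparation.'' So the step you single out as the principal obstacle is precisely the step the printed proof also omits; in that respect your proposal and the paper's proof are at the same level of completeness.

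However, the route you sketch for that missing step contains a concrete error that would derail its execution. In the difference variables $u_\ell$, the correct region is the \emph{slab} $\{u\in[-1,1]^{k-1}:\left\vert \sum_{\ell}u_\ell\right\vert \le 1\}$, not the \emph{slice} $\{\sum_\ell u_\ell =0\}$ that you write down: the walk closes with an edge from $v_k$ back to $v_1$, whose position is fixed at the origin, so the closing condition is that this edge is short, $\left\vert x_k\right\vert=\left\vert \sum_\ell u_\ell\right\vert \le 1$, not that the free steps sum to zero. The slice is a codimension-one set whose sectional volume is proportional to the uniform-sum density at the origin --- a different Eulerian-type quantity that gives wrong exact values (for $k-1=2$ the slab has volume $3$, matching the paper's numerical value $H_2=3$, whereas the central slice is a segment of length $2\sqrt{2}$). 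Working with the slab, the map $u_\ell=2v_\ell-1$ sends it to $\{v\in[0,1]^{k-1}:\tfrac{k-2}{2}\le\sum_\ell v_\ell\le\tfrac{k}{2}\}$; when $k$ is even these bounds are integers and the classical cube-slice result gives a single Eulerian number, e.g.\ $\mathrm{Vol}\left(H_3(1)\right)=2^3E_{3,1}/3!=16/3$, while for $k$ odd the bounds are half-integers, which is exactly what forces the binomial-weighted sum of Eulerian numbers; this distinction is absent from your sketch and is the actual crux of the combinatorial identification. Two smaller points: your $k=2$ check needs no reconciliation, since $\mathrm{Vol}\left(H_1(1)\right)=2$ gives $\mathbb{E}\left[W_i^{(2)}\right]=2nr=\mathbb{E}\left[d_i\right]$ exactly; and your instinct to pin down the normalization by low-order checks is well founded --- carried out carefully, it reveals that the constant $\frac{1}{2(k-1)!}$ in the theorem statement is inconsistent by a factor of $4$ with the volume formula invoked in the paper's own proof, which would yield $\frac{2}{(k-1)!}$ instead.
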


\begin{proof}
Consider a particular closed walk, $\mathbf{w}_{k}=\left(
v_{1},v_{2},v_{3},...,v_{k},v_{1}\right) $, of length $k$ starting and
ending at node $v_{1}$ (which we locate at zero for computational
convenience). A walk $\mathbf{w}_{k}$ exists if and only if
there exists a set of $k-1$ nodes, $\left\{ v_{2},v_{3},...,v_{k}\right\} ,$
such that $\left\vert x_{1}\right\vert \leq r$, $\left\vert
x_{j+1}-x_{j}\right\vert \leq r$ for $j=2,...,k-1$, and $\left\vert
x_{k}\right\vert \leq r$. Since the distribution of vertices on $\mathbb{T}%
^{1}$ is uniform (with density $n$) one can compute the expectation of $%
W_{i}^{\left( k\right) }$ as 
\begin{equation*}
\mathbb{E}\left[ W_{i}^{\left( k\right) }\right] =\int_{(x_{2},...,x_{k})\in
H_{k-1}\left( r\left( n\right) \right) }n^{k-1}~dx_{2}...dx_{k},
\end{equation*}%
where%
\begin{align}
H_{k-1}\left( r\right) & =\left\{ (v_{2},v_{3},...,v_{k})\in \mathbb{T}^{k-1}%
\text{ s.t. }\left\vert v_{2}\right\vert \leq r,\right.   \label{Vol Hk-1} \\
& \left. \text{ \ \ }\left\vert x_{j+1}-x_{j}\right\vert \leq r\text{ for }%
j=2,...,k-1,\right.   \notag \\
& \left. \text{ \ \ \ }\left\vert x_{k}\right\vert \leq r\right\} .  \notag
\end{align}%
Thus, $\mathbb{E[}W_{i}^{\left( k\right) }]$ can be computed as $n^{k-1}$Vol$%
\left[ H_{k-1}\left( r\right) \right] $, where $H_{k-1}\left( r\right) $ is
a convex polyhedron defined by a set of linear inequalities. Finally, note
that the homogeneous structure of the system of linear inequalities defining 
$H_{k-1}\left( r\right) $ allows us to write Vol$(H_{k-1}\left( r\right)
)=r^{k-1}$Vol$(H_{k-1}\left( 1\right) )$. Therefore,%
\begin{equation}
\mathbb{E}\left[ W_{i}^{\left( k\right) }\right] =\left( nr\right) ^{k-1}%
\text{Vol}\left( H_{k-1}\left( 1\right) \right) .  \label{Walks as Volumes}
\end{equation}%
The volume of $H_{k-1}\left( 1\right) $ is a particular number, independent
of the RGG parameters, i.e., $n$ and $r$. Furthermore, we have found an
explicit analytical expression for the volume of $H_{k}\left( 1\right) $ for
any $k\geq 1$. Although we do not provide details of our derivation, due to
space limitations, an explicit expression for the volume of $H_{k}\left(
1\right) $ is given by \cite{PJ09}:%
\begin{equation}
\text{Vol}\left( H_{k}\left( 1\right) \right) =\frac{2}{k!}\sum_{j=1}^{k-1}%
\binom{k}{j-1}~E_{k,j},  \label{Volumes as Eulers}
\end{equation}%
where $E_{d,k}$ denotes the Eulerian numbers. Substituting (%
\ref{Volumes as Eulers}) in (\ref{Walks as Volumes}) we obtain the statement
of our lemma.
\end{proof}

In \cite{Las83}, Lasserre proposed an algorithm to compute the volume of a
polyhedron defined by a set of linear inequalities. We can use this algorithm to verify the validity of (\ref{Vol Hk-1}). Applying this algorithm to the set of inequalities in (\ref{Vol Hk-1}), we compute the following
volumes for $k=1,...,10$:%
\begin{align*}
H_{1}& =2,~H_{2}=3,~H_{3}=5.333...,~H_{4}=9.58333..., \\
H_{5}& =17.6000...,~H_{6}=32.70555...,~H_{7}=61.3587..., \\
H_{8}& =115.947...,~H_{9}=220.3238...,~H_{10}=420.825...
\end{align*}%
These numerical values match perfectly with our analytical expression in
Theorem \ref{Spectral Moments 1D}.

If $nr\left( n\right) =\Omega \left( \log n\right) $ (i.e., the average
degree grows as $\log n$, or faster), one can prove that $\mathbb{E}\left[ m_{k}\right] =\left(
1+O\left( \log ^{-1}n\right) \right) ~\mathbb{E[}W_{i}^{\left( k\right) }]$.
Hence, from (\ref{Walks as Volumes}) and (\ref{Volumes as Eulers}), we have
the following closed-form expression for the asymptotic expected spectral
moments:%
\begin{equation}
\mathbb{E}\left[ m_{k}\right] \asymp \left( nr\right) ^{k-1}\frac{1}{%
2\left( k-1\right) !}\sum_{j=1}^{k-2}\binom{k-1}{j-1}~E_{k-1,j}.
\label{Expected Spectral Moments 1D}
\end{equation}

In the following table
we compare the analytical result in (\ref{Expected Spectral Moments 1D})
with numerical realizations of the empirical spectral moments. In our
simulations, we distribute $n=1000$ nodes uniformly in $\mathbb{T}^{1}$ and
choose a connectivity radius $r=0.01$ (which results in an average degree $%
\mathbb{E}[d_{i}]=20$). The second, third, and forth column in the following
table represent the analytical expectations of the spectral moments, the
empirical average of the spectral moments from 10 random realizations of the
RGG, and the corresponding empirical typical deviation, respectively.%
\begin{equation*}
\begin{tabular}{|l|lll|}
\hline
$k$ & $\mathbb{E}\left[ m_{k}\right] $ & $\text{Empirical Average}$ & $\text{%
Typical Deviation}$ \\ \hline
1 & \multicolumn{1}{|c}{0} & \multicolumn{1}{c}{1.38e-16} & 
\multicolumn{1}{c|}{1.3e-15} \\ 
2 & \multicolumn{1}{|c}{20} & \multicolumn{1}{c}{19.9326} & 
\multicolumn{1}{c|}{0.0976} \\ 
3 & \multicolumn{1}{|c}{300} & \multicolumn{1}{c}{297.284} & 
\multicolumn{1}{c|}{4.3598} \\ 
4 & \multicolumn{1}{|c}{5,733} & \multicolumn{1}{c}{5,956.30} & 
\multicolumn{1}{c|}{196.94} \\ \hline
\end{tabular}%
\end{equation*}%
Our numerical results present an excellent match with our analytical
predictions.

\subsection{Spectral Moments of Two-Dimensional RGG's}

In this subsection, we derive expressions for the first three
expected spectral moments of $G(\chi _{n};r\left( n\right) )$ when the nodes
are uniformly distributed in $\mathbb{T}^{2}$. The expressions for
the first and second expected spectral moments are $m_{1}=0$ and $\mathbb{E}\left[
m_{2}\right] =\pi nr^{2}$. The third spectral moment is
proportional to the number of closed walks of length three in the graph. In
the two-dimensional case, we count the number of triangular walks using a
technique that we illustrate in Fig. $\emph{1}$. In this figure, we plot two
nodes $v_{i}$ and $v_{j}$. The
parameters $\rho $ and $\phi $ in Fig. $\emph{1}$ denote the distance and
angle between these two nodes, i.e., $\rho \triangleq \left\Vert \mathbf{x}%
_{j}-\mathbf{x}_{i}\right\Vert $ and $\phi =\measuredangle \left( \mathbf{x}%
_{j}-\mathbf{x}_{i}\right) $. An edge between $v_{i}$ and $v_{j}$ exists if an
only if $v_{j}$ is located inside the circle $S_{i}\left( r\right) $. In this setting, the probability
of existence of a triangle touching both $v_{i}$ and $v_{j}$ is equal to the probability of a third node $v_{k}$ being in the
shaded area $A_{l}$ (see Fig. $\emph{1}$). This area is the result of
intersecting the circles $S_{i}\left( r\right) $ and $S_{j}\left( r\right) $%
, and the resulting probability is equal to $n~A_{l}$. The
intersecting region $A_{l}$ is a symmetric lens which area can be computed
as a function of $\rho $ and $r$ as follows:%
\begin{equation}
A_{l}\left( \rho ;r\right) =\left\{ 
\begin{array}{cc}
2r^{2}\cos ^{-1}\left( \frac{\rho }{2r}\right) -\frac{\rho }{2}\sqrt{%
4r^{2}-\rho ^{2}}, & \text{for }\rho \leq r, \\ 
0, & \text{for }\rho >r.%
\end{array}%
\right.   \label{Lens Area}
\end{equation}%
Therefore, we can compute the expected number of
triangles by integrating over the set of all possible positions of $v_{j}$,
i.e., $\eta \in \left[ 0,r\right] $ and $\phi \in \lbrack 0,2\pi )$, as
follows%
\begin{equation}
\mathbb{E}\left[ t_{i}\right] =\int_{\rho =0}^{r}\int_{\phi =0}^{2\pi
}n^{2}A_{l}\left( \rho ;r\right) ~\rho ~d\rho ~d\phi .
\label{Triangle Integral}
\end{equation}%
After substituting (\ref{Lens Area}) in (\ref{Triangle Integral}), we can
explicitly solve the resulting integral to be%
\begin{equation}
\mathbb{E}\left[ t_{i}\right] =\left( \pi -\frac{3\sqrt{3}}{4}\right) \pi
\left( nr^{2}\right) ^{2}\approx 5.78\left( nr^{2}\right) ^{2}.
\label{Triangles 2D}
\end{equation}%
Consequently, we have the following expression for the third
expected spectral moment $\mathbb{E}\left[ m_{3}\right] =\frac{1}{n}%
\sum_{i=1}^{n}\mathbb{E}\left[ t_{i}\right] =\mathbb{E}\left[ t_{i}\right] $.

\begin{figure}
 \centering
 \includegraphics[width=0.85\linewidth]{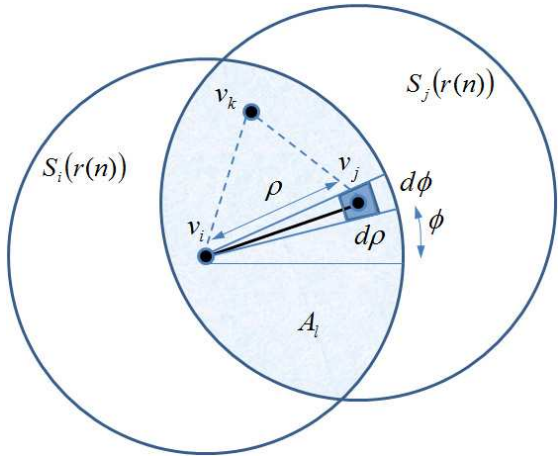}
 \caption{This figure illustrates the technique proposed in Section III.B to count the number of triangular walks in a two-dimensional RGG.}
\end{figure}

In the following, we extend the technique introduced above to compute closed walks of arbitrary length. Denote by $W_{i}^{\left(
k\right) }$ the number of closed walks of length $k$ starting at node $v_{1}$
in $G(\chi _{n};r\left( n\right) )$. The idea behind our technique is
illustrated in Fig. $\emph{2}$, where we represent a particular closed
walk of length $6$. We denote this walk by\ $\mathbf{w}%
_{k}=\left( v_{1},v_{2},...,v_{k-1},v_{k},v_{1}\right) $. We define the
following set of relative distances and angles between every pair of
connected vertices: $\rho_{i}\triangleq \left\Vert \mathbf{x}_{i+1}-\mathbf{x}%
_{i}\right\Vert $ and $\phi _{i}=\measuredangle \left( \mathbf{x}_{i+1}-%
\mathbf{x}_{i}\right) $ for $i=1,...,k-2$. We also define the following parameter%
\begin{equation}
\rho =\left\vert \sum_{j=1}^{k-2}\rho_{j}e^{\mathbf{i}\alpha _{j}}\right\vert ,
\label{Rho parameter}
\end{equation}%
($\mathbf{i}=\sqrt{-1}$) which is the resulting distance between nodes $v_{k-1}
$ and $v_{1}$ given a particular set of distances and angles $\left\{ \left(
r_{i},\phi _{i}\right) \right\} _{i=1,...,k-2}$ (see Fig. $\emph{2}$). In
this setting, the conditional probability of existence of a walk $\mathbf{w}%
_{k}=\left( v_{1},v_{2},...,v_{k-1},v_{k},v_{1}\right) $ given the set of
relative positions, $\left\{ \left( r_{i},\phi _{i}\right) \right\}
_{i=1,...,k-2}$, is equal to the probability of $v_{k}$ being in the shaded
area $A_{l}$ in Fig. $\emph{2}$. We have an expression for this area in (\ref%
{Lens Area}), where $\rho $ is defined in (\ref{Rho parameter}).
Finally, we can compute the expectation of $W_{i}^{\left( k\right) }$ by
performing an integration over the set of all possible positions (i.e., $%
\rho _{j}\in \left[ 0,r\right] $ and $\phi _{j}\in \lbrack 0,2\pi )$ for $%
j=2,...,k-1$), as follows%
\begin{equation*}
\mathbb{E}\left[ W_{i}^{\left( k\right) }\right] =n^{k-1}\int_{\left( 
\mathbf{\eta ,\varphi }\right) \in C_{k-2}}A_{l}\left( \rho ;r\right)
~\prod_{j=2}^{k-1}\eta _{j}~d\mathbf{\eta ~}d\mathbf{\varphi ,}
\end{equation*}%
where $\mathbf{\eta =}\left( \rho _{2},...,\rho _{k-1}\right) $, $\mathbf{%
\varphi }=\left( \phi _{2},...,\phi _{k-1}\right) $, and $%
C_{k-2}=\{\left( \mathbf{\eta ,\varphi }\right) :\mathbf{\eta \in }\left[ 0,r%
\right] ^{k-2}$ and $\mathbf{\varphi }\in \lbrack 0,2\pi )^{k-2}\}$.
Although a closed-form for the above expression can only be computed for $%
k \leq 3$, we can always find a good approximation via numerical integration.
For example, the integration for $k=4$ gives us $\mathbb{E[}W_{i}^{\left(
4\right) }]\approx 14.2511\left( nr^{2}\right) ^{3}$.

\begin{figure}
 \centering
 \includegraphics[width=0.95\linewidth]{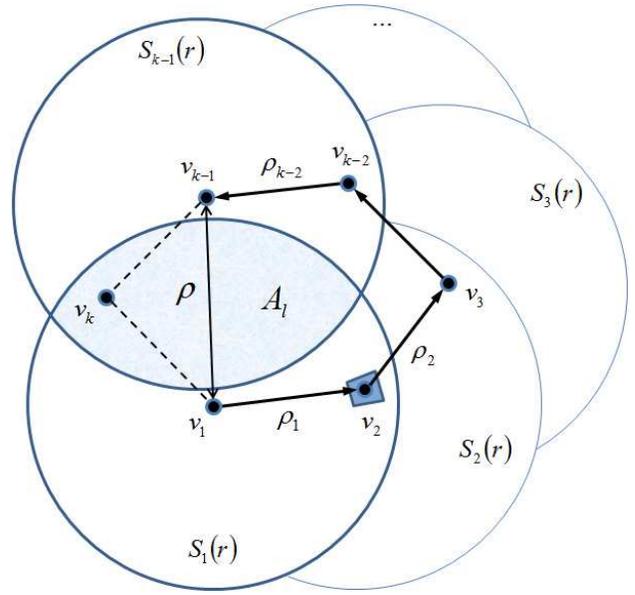}
 \caption{This figure illustrates the technique proposed in Section III.B to count the number of closed walks of length $k$ in a two-dimensional RGG.}
\end{figure}

In the following table, we compare our analytical results with numerical
realizations of the empirical spectral moments of a two-dimensional RGG. In
our simulations, we distribute $n=1000$ nodes uniformly on $\mathbb{T}^{2}$
and choose a connectivity radius $r=\sqrt{50/\pi n}\approx 0.1784$ (which
results in an average degree $\mathbb{E}[d_{i}]=50$). The second, third, and
forth columns in the following table represent the analytical expectation of
the spectral moments, the empirical average from 10 random realizations, and
the corresponding empirical typical deviation, respectively.%
\begin{equation*}
\begin{tabular}{|l|lll|}
\hline
$k$ & $\mathbb{E}\left[ m_{k}\right] $ & $\text{Empirical Average}$ & $\text{%
Typical Deviation}$ \\ \hline
1 & \multicolumn{1}{|c}{0} & \multicolumn{1}{c}{-9.2e-16} & 
\multicolumn{1}{c|}{1.1e-15} \\ 
2 & \multicolumn{1}{|c}{50} & \multicolumn{1}{c}{50.0820} & 
\multicolumn{1}{c|}{0.3908} \\ 
3 & \multicolumn{1}{|c}{1,464.1} & \multicolumn{1}{c}{1,475.8} & 
\multicolumn{1}{c|}{37.3777} \\ 
4 & \multicolumn{1}{|c}{59,452} & \multicolumn{1}{c}{60,127} & 
\multicolumn{1}{c|}{2,955.3} \\ \hline
\end{tabular}%
\end{equation*}%
Our numerical results present an excellent match with our analytical
predictions.

In the following section, we use the results introduced in this section to
study the spreading of an infection in a random geometric network.

\section{Spectral Analysis of Virus Spreading}

In this section, we use the expressions for the expected spectral moments to
design random geometric networks to tame an initial viral infection in the
network. In our design problem, we consider that the size of the network $n$
and the parameters in (\ref{Epidemic Model}), i.e., $\beta $ and $\delta $,
are given. Hence, our design problem is reduced to studying the range of
values of $r$ for which the RGG is well-suited to tame an initial viral
infection.

A sufficient condition for local stability around the disease-free state was
given in (\ref{Epidemic Conditions}). Thus, we have to find the range of
values of $r$ for which the associated spectral radius $\lambda _{\max }$ is
smaller than the ratio $\delta /\beta $. In the following subsection, we
show how to derive an analytical upper bound for the spectral radius based
on the expected spectral moments.

\subsection{Analytical Upper Bound for the Spectral Radius}

In order to upper-bound the spectral radius, we use Wigner's high-order
moment method \cite{Wig58}. This method provides a probabilistic upper bound
based on the asymptotic behavior of the $k$-th expected spectral moments for
large $k$. We present the details for a one-dimensional RGG, although the
same technique can be applied to RGG's in higher dimensions. For a
one-dimensional RGG in the connectivity regime, we derived an explicit
expression for the expected spectral moments in (\ref{Expected Spectral
Moments 1D}). A logarithmic plot of Vol$\left( H_{k}\left( 1\right) \right) $ for $k=1,2,...,9$ unveils that Vol$\left( H_{k}\left( 1\right) \right) \rightarrow \beta_{1}c_{1}^{k}$ for large-order moments (a line in logarithmic
scale), where, from a numerical fitting, we find that $\beta _{1}=0.35$ and $%
c_{1}=1.9192$. Therefore, from (\ref{Expected Spectral Moments 1D}) we have%
\begin{equation*}
\mathbb{E}\left[ m_{k}\right] \asymp \beta _{1}\left( c_{1}nr\right)
^{k},
\end{equation*}%
for large $k$.


For even-order expected spectral moments (i.e., $k=2s$ for $s\in \mathbb{N}$%
), the following holds%
\begin{equation*}
\mathbb{E}\left[ m_{2s}\right] =\frac{1}{n}\sum_{i=1}^{n}\mathbb{E}[\lambda
_{i}^{2s}]\geq \frac{1}{n}\mathbb{E}[\lambda _{\max }^{2s}].
\end{equation*}%
Define $f\left( n\right) =n^{1-\delta }\log n$; thus, for any $\varepsilon
,\delta >0$ (and $c_{1}=1.9192$), we can apply Markov%
\'{}%
s inequality as follows%
\begin{eqnarray*}
\mathbb{P}\left( \lambda _{\max }^{2s}\geq (c_{1}nr+\varepsilon rf\left(
n\right) )^{2s}\right)  &\leq &\frac{\mathbb{E}[\lambda _{\max }^{2s}]}{%
(c_{1}nr+\varepsilon rf\left( n\right) )^{2s}} \\
&\leq &\frac{n~\mathbb{E}\left[ m_{2s}\right] }{(c_{1}nr+\varepsilon
rf\left( n\right) )^{2s}},
\end{eqnarray*}%
For large $s$, one can prove that \cite{Pre07}%
\begin{equation*}
\mathbb{P}\left( \lambda _{\max }\geq c_{1}nr+\varepsilon rf\left( n\right)
\right)  \leq n\beta _{1}\exp \left( -\frac{\varepsilon }{c_{1}}sn^{-\delta }\log
n\right) .
\end{equation*}%
Assuming that $s$ grows as $\beta _{2}n^{\delta }$, for $\beta _{2},\delta >0
$, we have%
\begin{equation*}
\mathbb{P}\left( \lambda _{\max }\geq c_{1}nr+\varepsilon rf\left( n\right)
\right) \leq n\beta _{1}\exp \left( -\frac{\beta _{2}\varepsilon }{c_{1}}%
\log n\right) =o\left( 1\right) ,
\end{equation*}%
for all sufficiently large $\varepsilon $. Thus,%
\begin{equation}
\lim_{n\rightarrow \infty }\mathbb{P}\left( \lambda _{\max
}<c_{1}nr+\varepsilon rn^{1-\delta }\log n\right) =1.  \label{Upper Bound}
\end{equation}%
In other words, $\lambda _{\max }$ is upper-bounded by $cnr+\varepsilon
rn^{1-\delta }\log n$ with probability $1$ for $n\rightarrow \infty $. In
practice, for a large (but finite) $n$, we can use $1.9192~nr$ as an upper
bound of $\lambda _{\max }$. In Fig. 4, we plot the empirical
spectral radius of an RGG with $n=1000$ and $r\left( n\right) =\bar{d}/2n$,
with expected degrees $\bar{d}=[$10:1:100$]$ (circles in the figure). We also plot the values of our
analytical upper bound, $1.9192~nr$, in solid line.

\begin{figure}
 \centering
 \includegraphics[width=0.9\linewidth]{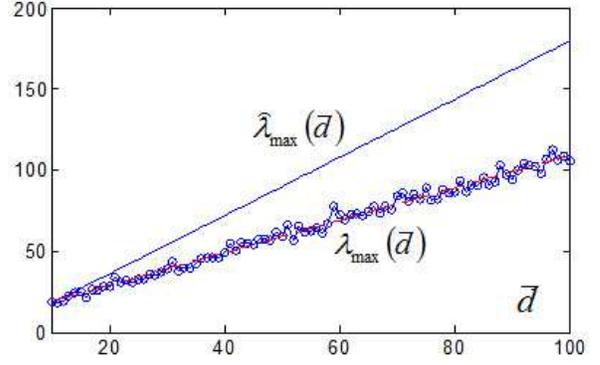}
 \caption{Comparison between the empirical spectral radius of an RGG (circles in the plot) and the values of our analytical upper bound (solid line) for $n=1000$ and $r\left( n\right) =\bar{d}/2n$, with expected degrees $\bar{d}=[$10:1:100$]$.}
\end{figure}

The technique introduced in this subsection is also valid for RGG's in
higher-dimensions. In general, one can prove that for a $d$-dimensional RGG that the expected spectral moment grows as $\mathbb{E}\left[ m_{k}\right]
\rightarrow \beta _{d}\left( c_{d}nr^{d}\right) ^{k}$. Applying Wigner's
high-order moment method to this sequence, one can derive a probabilistic
upper bound similar to (\ref{Upper Bound}). In particular, we have that $\lambda _{\max }<c_{d}nr^{d}$ for large $n
$ with high probability. In the following subsection, we use our results to
design the connectivity radius of an RGG in order to tame an initial viral
infection.

\subsection{Spectral Radius Design}

Once the spectral radius is upper-bounded, our design problem becomes
trivial. Since (\ref{Epidemic Conditions}) represents a sufficient condition
for local stability around the disease-free state, we have the following
condition to tame an initial viral infection for a $d$-dimensional RGG:%
\begin{equation*}
\lambda _{\max }\left( G(\chi _{n};r\right) )<c_{d}nr^{d}<\frac{\delta }{%
\beta },
\end{equation*}%
which implies the following design condition for the connectivity radius:%
\begin{equation}
r<\left( \frac{\delta }{\beta c_{d}n}\right) ^{1/d},  \label{Radius design}
\end{equation}%
where $c_{d}$ is a positive constant that depends on the dimension of $%
\mathbb{T}^{d}$. For example, in the one-dimensional case, we have $%
c_{1}=1.9192$; hence, (\ref{Radius design}) becomes $r<\delta /\left(
1.9192~\beta n\right) $. We now validate this result with several numerical
simulations of a viral infection in a one-dimensional RGG.

Consider an RGG with $n=1000$ nodes and a connectivity radius of $r=0.005$
(which implies an average degree of $10$). The resulting spectral radius in
this RGG is $\lambda _{\max }=17.2629$. In our numerical simulations, we
choose the initial probability of infection to be $p_{i}\left[ 0\right] \sim
0.01$\textsf{Unif}$[0,1]$; hence, approximately $1\%$ of the nodes in the
network are initially infected. In our first experiment, we choose a rate of
infection $\beta =0.020$, and a recovery rate $\delta =0.018$. Since the
sufficient condition for viral control in (\ref{Radius design}) is not
satisfied, we cannot guarantee an initial infection to be tamed. In
Fig. $\emph{5}$ we show an image of the evolution of the probability of
infection for this case. This figure is a color map for the simultaneos
evolution of $p_{i}\left[ n\right] $ for $i=1,...,1000$. Each horizontal
line represents the value of $p_{i}\left[ n\right] $ for a particular $i$. In
this color map, blue represents a zero value, green and yellow tones
represent intermediate values, and red represents values close to one. On the other hand,
if we increase the recovery rate to $\delta =0.35$ keeping the rest of
parameters fixed, we have that $\delta /\beta =17.50>\lambda _{\max }$ and
we satisfy condition (\ref{Epidemic Conditions}). Hence, the probability of
infection of every node is guaranteed to converge towards zero. In Fig. $%
\emph{6}$, we observe the color map for the evolution of the probability of
infection in this case, where we clearly observe how $p_{i}\left[ n\right]
\rightarrow 0$ for all $i$. Hence, this latter RGG is well-suited to tame
initial viral infections.

\begin{figure}
 \centering
 \includegraphics[width=0.9\linewidth]{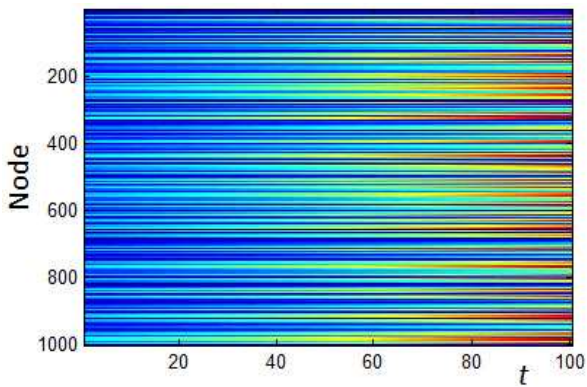}
 \caption{Color map representing the evolution of the probabilities of infection $p_{i}\left[ n\right] $ for $i=1,...,1000$ in an RGG with $n=1000$ nodes, connectivity radius $r=0.005$, rate of infection $\beta =0.020$, and recovery rate $\delta =0.018$. Each horizontal line represents the value of $p_{i}\left[ n\right] $ for a particular $i$. In this color map, blue represents a zero value, green and yellow tones represent intermediate values, and red represents values close to one. In this case, we observe an epidemic outbreak.}
\end{figure}

\begin{figure}
 \centering
 \includegraphics[width=0.9\linewidth]{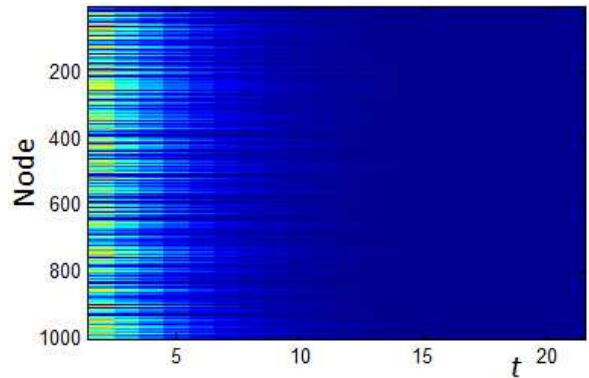}
 \caption{Color map representing the evolution of the probabilities of infection $p_{i}\left[ n\right]$ when we increase the recovery rate to $\delta =0.35$ (the rest of parameters are the same as we used for Fig. 5). We observe how the probability of infection of every node converges towards zero in this case. }
\end{figure}

\section{Conclusions}
In this paper, we have studied the spreading of a viral infection in a random geometric graph from a spectral point of view. We have focused our attention on studying the eigenvalue distribution of the adjacency matrix. We have derived, for the first time, explicit expressions for the spectral moments of the adjacency matrix as a function of the density of nodes and the connectivity radius. We have then applied our results to the problem of viral spreading in a network with a low-density infection. Using our expressions, we have derived upper bounds for the spectral radius of the adjacency matrix. Finally, we have applied this upper bound to design random geometric graphs that are well-suited to tame an initial low-density infection. Our numerical results match our predictions with high accuracy.


\end{document}